\documentclass[letterpaper,11pt]{article}
\usepackage[margin=1in]{geometry}
\usepackage{setspace,graphicx,url,hyperref,natbib}
\usepackage{amsmath,amsthm,amssymb,amsfonts}
\usepackage[linesnumbered,ruled]{algorithm2e}
\newcommand{\N}{\ensuremath{\mathbb{N}}}
\newcommand{\R}{\ensuremath{\mathbb{R}}}

\newtheorem{theorem}{Theorem}
\newtheorem{proposition}{Proposition}

\title{Exact computation of the network modularity with branch-and-bound}
\author{Emily Weng and Georg Hahn}
\date{Harvard T.H.\ Chan School of Public Health, Boston, MA 02115, USA}

\begin{document}
\maketitle

\abstract{We consider the computation of the network modularity, a common measure of the structure of networks or graphs that quantifies the division of a network into modules or clusters. The computation of the network modularity is an NP-hard problem, thus making its exact computation often infeasible in practice. This has resulted in the development of several heuristics in the literature. In this work, we consider the exact computation of the network modularity with the help of a branch-and-bound algorithm. Our algorithm is guaranteed to find the optimal clustering that maximizes the network modularity, however it achieves this without a full exploration of the search space. This is accomplished by bounding partial solutions and discarding them if it can be foreseen that a partial solution will not yield an improvement over an existing solution. We assess our algorithm with respect to accuracy and runtime and compare it to a brute-force approach as well as two benchmark algorithms, simulated annealing and the Louvain method.}

\onehalfspacing
\section{Introduction}
\label{sec:introduction}
We consider the computation of the network modularity, a measure that indicates the degree of separation of a graph or network into modules or clusters. We are given a graph or network $G=(V,E)$ with $n \in \N$ vertices $V=\{1,\ldots,n\}$ and edge set $E \subseteq V \times V$. We define the node degree $k_v$ for each $v \in V$ and the number of edges $m = |E|$. The network modularity is defined as
\begin{align}
    Q = \frac{1}{2m} \sum_{vw} \sum_{r=1}^R \left[ A_{vw} - \frac{k_v k_w}{2m} \right] S_{vr} S_{wr} = \frac{1}{2m} \text{Tr}(S^\top B S),
    \label{eq:modularity}
\end{align}
where $A \in \R^{n \times n}$ is the adjacency matrix of $G$, $R \in \N$ is the number of clusters (assumed to be known in advance), $S_{vr} \in \{0,1\}$ is an indicator encoding that vertex $v$ is assigned to cluster $r \in \{1,\ldots,R\}$, $S \in \R^{n \times R}$ is the matrix with entries $S_{vr}$ for $v \in V$ and $r \in \{1,\ldots,R\}$, $B \in \R^{n \times n}$ is a matrix with entries $B_{vw}=A_{vw}-\frac{k_v k_w}{2m}$ for all $v,w \in V$, and Tr denotes the trace of a matrix. Each vertex $v$ can only be assigned to one cluster. The modularity value ranges from $-1/2$ (non-modular clustering) to $1$ (fully modular clustering). The task of modularity optimization consists in the allocation of each vertex $v \in V$ to exactly one cluster $r \in \{1,\ldots,R\}$ (encoded with $S_{vr}=1$) such that $Q$ is maximized. As a problem belonging to the class of binary quadratic optimization (with constraints on the assignments of the variables $S_{vr}$ for $v \in V$ and $r \in \{1,\ldots,R\}$), the global maximization of eq.~\eqref{eq:modularity} is NP-hard \citep{Boroujeni2022}.

As the computation of the network modularity is an NP-hard problem, its exact computation is often infeasible, thus allowing for a rich variety of approximation approaches available in the literature. Known algorithms can be grouped into roughly seven categories according to their mode of action (divisive hierarchical methods, direct optimization, agglomerative methods, multilevel algorithms, dynamics-based methods, statistical inference and generative models, and graph neural network methods). A detailed overview of current methodology is given in Section~\ref{sec:literature}.

In this contribution, we focus on the exact computation of the network modularity value, that is the task of finding an assignment of $S_{vr} \in \{0,1\}$ for each vertex $v \in V$ that globally maximizes eq.~\eqref{eq:modularity}. We attempt this with the help of a branch-and-bound algorithm, meaning an algorithm that explores the entire search space (of size $O(R^{|V|})$) in a recursive fashion while aiming to omit the exploration of partial solutions as soon as it can be foreseen that they will not yield the global optimum. This is achieved with the help of valid (guaranteed) bounds on any partial solution, that is, by computing guaranteed bounds on how much to the above (or below) the modularity can change under any completion of the partial assignment. If, based on these bounds, it is obvious that a partial solution will not yield a new global maximum, we omit its exploration and proceed with the next unexplored partial solution.

We assess our proposed method with respect to the modularity-value gap and runtime in seconds in three experiments on simulated data. The modularity-value gap is defined as the difference between the maximum value of eq.~\eqref{eq:modularity} achieved via brute-force enumeration and the modularity value returned by any method under investigation. We compare our algorithm to two classic benchmarks available in the literature, simulated annealing \citep{Kirkpatrick1983} and the Louvain method \citep{Blondel2008}.

This article is structured as follows. A detailed literature review can be found in Section~\ref{sec:literature}. We introduce the proposed bounds on the network modularity and the resulting branch-and-bound algorithm in Section~\ref{sec:methods}. In Section~\ref{sec:results}, we assess our proposed algorithm in simulations and compare it to two benchmark algorithms (simulated annealing and the Louvain method) with respect to accuracy and runtime. The article concludes with a discussion in Section~\ref{sec:discussion}. Throughout the article, the length of a vector $v \in \R^n$ is denoted as $|v|$.

All implementations used in this article are freely available on GitHub \citep{github}.

\subsection{Literature review}
\label{sec:literature}
The optimization of eq.~\eqref{eq:modularity} has prompted the development of a variety of (exact and approximation) algorithms in the literature. First, there are optimized software tools for the exact optimization of eq.~\eqref{eq:modularity}. The most notable representatives are Gurobi \citep{gurobi} and CPLEX \citep{cplexmanual}, which handle general instances of linear, quadratic, and mixed-integer programming. Both tools are able to find the global maximum of eq.~\eqref{eq:modularity}, however they are not free to use. Moreover, they employ general purpose bounds for the exploration of the search space, whereas we employ a new bound specifically tailored for the modularity measure. Importantly, both Gurobi and CPLEX are not able to guarantee a runtime that is faster than exponential, due to the NP-hard nature of the modularity optimization problem. This means that their asymptotic runtime is not faster than the one of a brute-force solver, although Gurobi and CPLEX can be much faster in practical applications.

Second, there exists an extensive literature on approximate (heuristic) algorithms to maximize eq.~\eqref{eq:modularity}. Notably, these approximation algorithms can be categorized into different algorithm classes, which are briefly summarized in the following paragraphs.

One class of modularity maximization algorithms iteratively removes edges from a network to split it into communities. For instance, the Girvan-Newman algorithm \citep{Girvan2002} iteratively removes the edge with the highest betweenness centrality (a measure of centrality in a graph based on shortest paths), yielding nested communities. Similarly, the divisive algorithm of \cite{NewmanGirvan2004} iteratively removes edges from the network to split it into communities while updating betweenness measures. A greedy algorithm which is optimized for speed is also available in the literature \citep{Newman2004}. Another algorithm which iteratively bisects a given network into communities is the spectral algorithm of \cite{Newman2006}, whereby the bisectioning step is based on the leading eigenvector of the modularity matrix.

Multilevel algorithms, with its two most prominent representatives being the Louvain \citep{Blondel2008} and Leiden \citep{Traag2019} methods, perform two-phase iterative modularity maximization. In the first phase, the modularity is greedily increased by assigning nodes to neighbor communities while in the second phase, communities are collapsed into super-nodes before iteratively repeating the process. The Leiden method improves on the Louvain method with a refinement phase that guarantees internally connected, well-separated communities.

Dynamics-based methods make use of a random walk formulation of the network modularity \citep{Rosvall2008}. In \cite{Pons2005}, the authors define node similarity via short random walks and apply agglomerative clustering on the resulting distance matrix. Label propagation methods work by assigning a node to the majority label of its neighbors until convergence \citep{Raghavan2007}.

Statistical inference and generative methods are based on the idea of fitting a stochastic block model to the given network \citep{Holland1983, Karrer2011}, instead of approximating the modularity directly. Stochastic block models are generative models for random graphs that tend to produce graphs containing communities. A nonparametric Bayesian approach can be found in \cite{Peixoto2014}.

Graph neural network methods use so-called deep modularity networks \citep{Tsitsulin2023} or graph autoencoders to learn to detect communities using both structure and node attributes.

\section{Methods}
\label{sec:methods}
This section starts by briefly introducing the general idea of our branch-and-bound algorithm (Section~\ref{sec:idea}). We proceed by stating a bound on the network modularity (Section~\ref{sec:bounds}) which will lead to our proposed branch-and-bound algorithm (Section~\ref{sec:algorithm}).

\subsection{General idea}
\label{sec:idea}
The task is to assign binary indicators $S_{vr} \in \{0,1\}$ for each vertex $v \in V$ such that (1) each vertex is assigned to only one cluster $r \in \{1,\ldots,R\}$ (encoded with $S_{vr}=1$), and (2) the modularity value of eq.~\eqref{eq:modularity} is maximized. We achieve this with an exhaustive exploration of all possible assignments of the vertices in $V$ to the $R$ clusters, meaning an exhaustive assignment of $S_{vr}=1$ for all $v \in V$ and all $r \in \{1,\ldots,R\}$. This approach has exponential effort $O(R^{|V|})$.

In order to accelerate the brute-force enumeration, we consider bounding the modularity value for partial assignments, that is assignments in which only a subset of the vertices $v \in V$ are assigned to a cluster. If in such a case, it can be foreseen that the upper bound on the modularity value of an incomplete assignment is below the best known assignment found so far, the exploration of the incomplete assignment can be omitted.

\subsection{A bound on the modularity value}
\label{sec:bounds}
We consider eq.~\eqref{eq:modularity} and rewrite it using the matrix $B_{vw}=A_{vw}-\frac{k_v k_w}{2m}$ as
\begin{align}
    Q = \frac{1}{2m} \sum_{vw} \sum_{r=1}^R B_{vw} S_{vr} S_{wr} = \frac{1}{2m} \sum_{vw} B_{vw} \delta_{s_v,s_w},
    \label{eq:bounds}
\end{align}
where $m=|E|$ and $A$ is the adjacency matrix of the graph or network $G$ (see Section~\ref{sec:introduction}). The modularity value can also be expressed more compactly using the Kronecker delta function $\delta_{s_v,s_w}$, where $s_v \in \{1,\ldots,R\}$ denotes the cluster that each vertex $v$ is assigned to. The Kronecker delta function has the property that $\delta_{i,j}=1$ if $i=j$, and zero otherwise.

Consider the summand $B_{vw} \delta_{s_v,s_w}$ in eq.~\eqref{eq:bounds}. There are three options depending on whether none, one, or both of the variables $s_v$ or $s_w$ are unset (i.e., set to $NA$). If both $s_v$ and $s_w$ are set to a value (other than $NA$) and $\delta_{s_v,s_w}=1$ is satisfied, we compute the contribution $B_{vw} \delta_{s_v,s_w}$ to the overall sum in $Q$. We also update the lower and upper bounds ($L_Q$ and $U_Q$) on $Q$ to ensure that $L_Q \leq Q \leq U_Q$ holds true at all times. If both $s_v$ and $s_w$ are unset (meaning they are set to $NA$), but $v=w$, then we know $\delta_{s_v,s_w}=1$ for any assignment they will take in the optimal solution, so we are allowed to also add $B_{vw}$ to $Q$ as well as $L_Q$ and $U_Q$. If one or both of $s_v$ or $s_w$ are unset and $v \neq w$, we modify the bounds only: if $B_{vw}<0$, it contributes to the lower bounds only, otherwise it contributes to the upper bound only. A summary of the bound computation can be found in Algorithm~\ref{algo:bounds}. Since Algorithm~\ref{algo:bounds} iterates over all $(v,w) \in V \times V$, its runtime is easily bounded as $O(|V|^2)$.

\begin{algorithm}[t]
\caption{Modularity bounds}
\label{algo:bounds}
    \SetKwInOut{Input}{input}
    \SetKwFor{Loop}{repeat}{}{end}
    \Input{matrix $B$, partial assignment $s_v$ for all $v \in V$ denoting known clusters with $s_v \in \{1,\ldots,R\}$ and $s_v=NA$ otherwise, number of edges $m=|E|$\;}
    $Q \leftarrow 0$\;
	$L_Q \leftarrow 0$\;
	$U_Q \leftarrow 0$\;
	\For{$(v,w) \in V \times V$}{
        \If{$(s_v \neq NA \textnormal{ and } s_w \neq NA \textnormal{ and } s_v=s_w) \textnormal{ or } (s_v=s_w=NA \textnormal{ and } v=w)$}{
            $Q \leftarrow Q + B_{vw}$\;
            $L_Q \leftarrow L_Q + B_{vw}$\;
            $U_Q \leftarrow U_Q + B_{vw}$\;
        }
        \Else{
				\lIf{$B_{vw}<0$}{$L_Q \leftarrow L_Q + B_{vw}$}
				\lElse{$U_Q \leftarrow U_Q + B_{vw}$}
        }
    }
	$Q \leftarrow Q/(2m)$\;
	$L_Q \leftarrow L_Q/(2m)$\;
	$U_Q \leftarrow U_Q/(2m)$\;
    \Return{$Q,L_Q,U_Q$}\;
\end{algorithm}

Let $V_1 = \{ v \in V: s_v \neq NA \}$ and $V_2 = \{ v \in V: s_v=NA \}$ be the sets of indices for which $s_v$ is assigned to a cluster or unassigned, respectively. The two sets satisfy $V=V_1 \dot{\cup} V_2$. Algorithm~\ref{algo:bounds} returns valid lower and upper bounds satisfying $L_Q \leq Q(s) \leq U_Q$ for any partial assignment, meaning for any given labels $s_v \neq NA$ for $v \in V_1$, the bounds are valid for any way the remaining $s_v$, $v \in V_2$, are assigned to the clusters in the set $\{1,\ldots,R\}$. This is formalized in the following proposition.

\begin{proposition}
\label{prop:bounds}
For any partial assignment of $s$, with assigned entries given in $V_1$ and unassigned entries in $V_2$, Algorithm~\ref{algo:bounds} returns $L_Q$ and $U_Q$ satisfying $L_Q \leq Q(s) \leq U_Q$.
\end{proposition}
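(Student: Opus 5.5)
The plan is a termwise comparison. Fix the partial assignment with $V_1$ and $V_2$ as above, and let $\tilde s$ be an arbitrary completion, i.e.\ $\tilde s_v = s_v$ for $v \in V_1$ and $\tilde s_v \in \{1,\ldots,R\}$ arbitrary for $v \in V_2$. By eq.~\eqref{eq:bounds}, $2m\,Q(\tilde s) = \sum_{(v,w) \in V \times V} B_{vw}\,\delta_{\tilde s_v,\tilde s_w}$. Algorithm~\ref{algo:bounds} also builds $2m L_Q$ and $2m U_Q$ as sums over the same index set $V \times V$, with summands $\ell_{vw}$ and $u_{vw}$. It therefore suffices to show the pairwise inequality $\ell_{vw} \le B_{vw}\delta_{\tilde s_v,\tilde s_w} \le u_{vw}$ for every pair $(v,w)$. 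Summing over all pairs and dividing by $2m>0$ (assuming $m \ge 1$, as is implicit in eq.~\eqref{eq:modularity}) then gives $L_Q \le Q(\tilde s) \le U_Q$ for every completion.

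The pairwise claim follows from a case split that mirrors the branches of the algorithm.

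\emph{Case 1: the first branch fires.} This happens either when $v,w \in V_1$ with $s_v = s_w$, or when $v = w \in V_2$. In both situations $\delta_{\tilde s_v,\tilde s_w} = 1$ for every completion: the labels are fixed and equal, or it is the same vertex. The algorithm sets $\ell_{vw} = u_{vw} = B_{vw}$, which equals the true summand, so the inequality holds with equality.

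\emph{Case 2: the else branch fires.} Here the algorithm gives $\ell_{vw} = \min(B_{vw},0)$ and $u_{vw} = \max(B_{vw},0)$. Whatever $\delta \in \{0,1\}$ turns out to be, $B_{vw}\delta \in \{0, B_{vw}\}$, so it lies between $\min(B_{vw},0)$ and $\max(B_{vw},0)$. The algorithm adds $B_{vw}$ to $L_Q$ when $B_{vw}<0$ and to $U_Q$ otherwise; this is exactly the min/max just described, and the case $B_{vw}=0$ is harmless. This case also absorbs the pairs $v,w \in V_1$ with $s_v \ne s_w$, where the true summand is $0 \in [\min(B_{vw},0), \max(B_{vw},0)]$.

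The only point that needs care is checking that the case split is exhaustive and consistent with the algorithm's conditions. In particular, the condition "$s_v = s_w = NA$ and $v = w$" should correspond precisely to $v = w \in V_2$. Also, pairs in $V_1$ with different labels fall into the else branch, and that branch stays valid for them, as noted in Case 2. Once this bookkeeping is done, the rest is immediate.

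As a side remark, the statement should be read with $Q(s)$ meaning the modularity of any completion, not the partial quantity $Q$ the algorithm also returns. The argument above bounds every completion, and that is the form the branch-and-bound pruning uses.
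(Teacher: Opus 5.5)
Your proof is correct and follows the same termwise argument as the paper. Pairs whose labels always coincide (equal fixed labels, or the diagonal $v=w\in V_2$) contribute $B_{vw}$ exactly, and every other summand $B_{vw}\delta$ lies in $\{0,B_{vw}\}$, hence between $\min(B_{vw},0)$ and $\max(B_{vw},0)$. Your write-up is also a bit more careful than the paper's: it explicitly notes that pairs in $V_1$ with $s_v\neq s_w$ fall into the else branch, where the true summand $0$ is still bracketed, whereas the paper's proof glosses over that case.
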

\begin{proof}
Let $s$ be an arbitrary partial assignment, consisting of $s_v \in \{1,\ldots,R\}$ for $v \in V_1$ and $s_v=NA$ for $v \in V_2$. In eq.~\eqref{eq:bounds}, all pairs with $v,w \in V_1$ contribute $B_{vw} \delta_{s_v,s_w}$ to the modularity value. Therefore, when iterating over all $(v,w) \in V \times V$ in Algorithm~\ref{algo:bounds}, we add their contribution $B_{vw}$ to $Q$, $L_Q$, and $U_Q$. This also applies to the diagonal case $v=w$ when $s_v=s_w=NA$, since we know that $B_{vw}$ will eventually be added to $Q$ irrespectively of the actual value of $s_v$ (and thus $s_w$). In the off-diagonal case $v \neq w$, if either $s_v=NA$ or $s_w=NA$ or both are $NA$, we know from eq.~\eqref{eq:bounds} that $B_{vw} \delta_{s_v,s_w}$ will eventually evaluate to either zero or $B_{vw}$ depending on the final values of $s_v$ and $s_w$. If $B_{vw}<0$, the modularity value will either remain unchanged (if $s_v \neq s_w$) or decrease (if $s_v=s_w$) once these variables are set. If $B_{vw}>0$, the modularity value will either remain unchanged or increase. Therefore, to arrive at valid bounds, we only need to add $B_{vw}$ to $L_Q$ if $B_{vw}<0$. If $B_{vw}>0$, we add it to $U_Q$ only.
\end{proof}

\subsection{The branch-and-bound algorithm}
\label{sec:algorithm}
The proposed branch-and-bound algorithm is divided into two subroutines. The first is Algorithm~\ref{algo:explore} which recursively explores all possible assignments of $s_v \in \{1,\ldots,R\}$ for all $v \in V$. Algorithm~\ref{algo:explore} includes the best solution found so far in the variable $s^\ast$ (here we denote $s^\ast=(s_1^\ast,\ldots,s_n^\ast)$ in vector notation) each time it calls itself. A pointer $p \in \{1,\ldots,n\}$ indicates the current node which is being explored. If $p<n$, meaning not all vertices are assigned to a cluster, the algorithm bounds the modularity value for the current partial assignment in $s=(s_1,\ldots,s_n)$ with the help of Algorithm~\ref{algo:bounds}: if the upper bound $U_Q$ for the partial assignment in $s$ is below $Q(s^\ast)$, the modularity value of the best known assignment $s^\ast$ at any point while the algorithm is running, then $s^\ast$ can be safely returned and the exploration of the partial assignment $s$ can be stopped. Otherwise, the pointer $p$ is increased, and we set $s_p \leftarrow r$ for all possible $r \in \{1,\ldots,R\}$, followed by recursive calls of Algorithm~\ref{algo:explore}. If $p=n$, meaning that all vertices are assigned to a cluster, we compare the modularity value $Q(s)$ to $Q(s^\ast)$. If $Q(s)>Q(s^\ast)$ we return $s$. Otherwise, we return $s^\ast$.

\begin{algorithm}[t]
\caption{Recursive exploration}
\label{algo:explore}
    \SetKwInOut{Input}{input}
    \SetKwFor{Loop}{repeat}{}{end}
    \Input{matrix $B$, number of vertices $n=|V|$, number of edges $m=|E|$, pointer $p \in \N$, partial assignment $s_v$ for all $v \in V$, best solution $s^\ast_v$ for all $v \in V$, number of clusters $R \in \N$\;}
    \If{$p<n$}{
        $\underline{~~},L_Q,U_Q \leftarrow$ Algorithm~\ref{algo:bounds}$(B,s,m)$\;
        \lIf{$U_Q < Q(s^\ast)$}{\Return{$s^\ast$}}
        $p \leftarrow p+1$\;
        \For{$r \in \{1,\ldots,R\}$}{
            $s_p \leftarrow r$\;
            $s^\ast \leftarrow$ Algorithm~\ref{algo:explore}$(B,n,m,p,s,s^\ast,R)$\;
            $s_p \leftarrow NA$\;
        }
        \Return{$s^\ast$}\;
    }
    \Else{
        \lIf{$Q(s)>Q(s^\ast)$}{\Return{$s$}}
        \lElse{\Return{$s^\ast$}}
    }
\end{algorithm}

Our full algorithm (Algorithm~\ref{algo:branch}) follows immediately from Algorithm~\ref{algo:explore}, which is stated separately from the main algorithm in order to allow for clean recursive calls. In Algorithm~\ref{algo:branch}, we first compute the matrix $B$ (see Section~\ref{sec:introduction}) from the adjacency matrix $A$ underlying the network $G$ for which the modularity value is computed, initialize the partial solution $s_v$ to $NA$ for all $v \in V$ and the best known solution $s_v^\ast$ to the same cluster (cluster number one without loss of generality) for all $v \in V$, and call Algorithm~\ref{algo:explore} for recursive exploration with branch-and-bound. Note that the pointer $p$ is initialized with 1, meaning $s_1=1$ stays fixed throughout the exploration with Algorithm~\ref{algo:explore}. This is due to the fact that the cluster labels are interchangeable, and without loss of generality we fix $s_v=1$. The correctness of Algorithm~\ref{algo:branch} is formalized in the following theorem.

\begin{theorem}
\label{thm:correctness}
The modularity value $Q(s^\ast)$ returned by Algorithm~\ref{algo:branch} is the global maximum.
\end{theorem}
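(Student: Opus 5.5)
The proof is by induction over the recursion tree of Algorithm~\ref{algo:explore}, combined with Proposition~\ref{prop:bounds} to justify every pruning step. The search space consists of all complete assignments $s\in\{1,\ldots,R\}^n$ with $s_1=1$. By label symmetry, $Q(s)=Q(\pi\circ s)$ for every permutation $\pi$ of $\{1,\ldots,R\}$, since $\delta_{s_v,s_w}=\delta_{\pi(s_v),\pi(s_w)}$. Any assignment can therefore be relabeled so that $s_1=1$ without changing $Q$, and the maximum over the restricted space equals the global maximum.

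The central step is an invariant for a single call of Algorithm~\ref{algo:explore}. Consider a call with pointer $p$, a partial assignment $s$ in which $s_1,\ldots,s_p$ are set and the rest are $NA$, and incumbent $s^\ast$. I will show that the returned $s^{\ast\ast}$ satisfies
\[
Q(s^{\ast\ast}) \ge \max\Bigl\{Q(s^\ast),\ \max_{t \text{ completing } s} Q(t)\Bigr\},
\]
and that $s^{\ast\ast}$ is either $s^\ast$ or a complete assignment. The induction is on $n-p$.

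\textbf{Base case $p=n$.} The assignment $s$ is complete, and the algorithm returns whichever of $s$ and $s^\ast$ has the larger $Q$. The invariant holds.

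\textbf{Inductive step $p<n$.} There are two cases.
\begin{itemize}
\item If the call prunes, then $U_Q<Q(s^\ast)$. Proposition~\ref{prop:bounds} gives $Q(t)\le U_Q<Q(s^\ast)$ for every completion $t$ of $s$, so returning $s^\ast$ satisfies the invariant.
\item Otherwise, every completion of $s$ is a completion of exactly one child $s$ with $s_{p+1}=r$. Apply the inductive hypothesis to the children in sequence. The incumbent's $Q$ value is nondecreasing along the sequence, and after child $r$ it dominates all completions of children $1,\ldots,r$. After the last child the invariant follows.
\end{itemize}

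Applying the invariant to the top-level call from Algorithm~\ref{algo:branch} gives $Q(s^\ast)\ge\max_{t:\,t_1=1}Q(t)$. Here $p=1$, $s_1=1$, and the rest are $NA$. Since the returned $s^\ast$ is itself a feasible complete assignment, equality holds, and this value is the global maximum.

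The main obstacle is making sure Proposition~\ref{prop:bounds} covers exactly the completions reachable from the current node. The proposition is stated for arbitrary completions of the $NA$ entries, which is a superset of the reachable ones, so the upper bound is valid there. A related point is that pruning uses a strict inequality. Pruning with equality would also be fine, because we only need the maximum value, not all maximizers.

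A second subtlety is indexing: the partial assignment must correspond to $s_1,\ldots,s_p$ being set when the bounds are computed. Algorithm~\ref{algo:bounds} takes $s$ as given, and the bound is computed on the current $s$ before $p$ is incremented, so this is consistent.

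Finally, the recursion passes $s$ and resets $s_p\leftarrow NA$ after each child. This restores the state for subsequent siblings, which I will note explicitly so that the children in the inductive step are genuinely disjoint and exhaustive.
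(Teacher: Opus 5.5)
Your proof is correct and rests on the same idea as the paper's: Proposition~\ref{prop:bounds} makes the pruning test $U_Q<Q(s^\ast)$ safe, and the unpruned recursion is exhaustive over assignments with $s_1=1$; your induction on $n-p$ makes rigorous what the paper asserts ``by construction'' (the paper argues instead that the branch containing an optimum is never pruned, which needs the test to be strict, whereas your invariant would also hold if pruning used $U_Q\le Q(s^\ast)$). One caution on your indexing remark: read literally, Algorithm~\ref{algo:branch} sets $s_1=NA$ and Algorithm~\ref{algo:explore} increments $p$ before assigning, so $s_1$ is never set, and your top-level assumption $s_1=1$ follows the paper's stated intent rather than its pseudocode, which you should say explicitly.
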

\begin{proof}
The optimality of Algorithm~\ref{algo:branch} follows by construction. Algorithm~\ref{algo:branch} recursively explores all possible assignments of $s_v \in \{1,\ldots,R\}$ for $v \in V$ by calling Algorithm~\ref{algo:explore}, which computes the bounds $L_Q$ and $U_Q$ on all partial assignments with the help of Algorithm~\ref{algo:bounds}. These bounds satisfy $L_Q \leq Q(s) \leq U_Q$ for any partial assignment $s$ according to Proposition~\ref{prop:bounds}. Therefore, by pruning partial assignments with $U_Q < Q(s^\ast)$, where $Q(s^\ast)$ is the current best solution found during exploration, Algorithm~\ref{algo:explore} never dismisses the branch containing the optimal solution (the one leading to the global maximum). Therefore, the result of Algorithm~\ref{algo:branch} is optimal.
\end{proof}

\begin{algorithm}[t]
\caption{Modularity value computation via branch-and-bound}
\label{algo:branch}
    \SetKwInOut{Input}{input}
    \SetKwFor{Loop}{repeat}{}{end}
    \Input{adjacency matrix $A$ for graph $G=(V,E)$, number of vertices $n=|V|$, number of edges $m=|E|$, number of clusters $R \in \N$\;}
    $s_v \leftarrow NA$ for all $v \in V$\;
    $s_v^\ast \leftarrow 1$ for all $v \in V$\;
    Compute matrix $B_{vw}=A_{vw}-\frac{k_v k_w}{2m}$, where $k_v \in \N_0$ is the node degree for each $v \in V$\;
    $p \leftarrow 1$\;
    $b^\ast \leftarrow$ Algorithm~\ref{algo:explore}$(B,n,m,p,s,s^\ast,R)$\;
    \Return{$b^\ast$, $Q(b^\ast)$}\;
\end{algorithm}

The runtime of Algorithm~\ref{algo:branch} is still exponential in $|V|$ since in the worst case, all $R^{|V|}$ partial solutions have to be explored. Additionally, each computation of the bounds with Algorithm~\ref{algo:bounds} costs $O(|V|^2)$, resulting in a combined worst-case runtime for Algorithm~\ref{algo:branch} of $O(|V|^2 R^{|V|})$. Although this runtime does not seem like an improvement over the brute-force approach (see Section~\ref{sec:idea}), in practice Algorithm~\ref{algo:branch} allows one to considerably prune the search space and accelerate the exploration of all possible assignments of vertices to clusters (see Section~\ref{sec:results}).

\section{Experimental results}
\label{sec:results}
In this section, we assess Algorithm~\ref{algo:branch} with respect to the efficiency of the modularity bounds as well as its accuracy and runtime. We start with an outline of the experimental setting (Section~\ref{sec:setting}) and a brief review of the benchmark algorithms we consider (Section~\ref{sec:benchmarks}). Afterwards, we look at the efficiency of the bounding step in Section~\ref{sec:efficiency}, in particular at the proportion of partial or complete assignments of $s_v$ for $v \in V$ which are being explored by Algorithm~\ref{algo:branch}. In Section~\ref{sec:accuracy_runtime}, we compare Algorithm~\ref{algo:branch} to two benchmark algorithms.

\subsection{Setting}
\label{sec:setting}
We apply Algorithm~\ref{algo:branch} to random cluster graphs generated with the \textit{networkx} package in Python (using the function \textit{random\_partition\_graph}). This function requires three parameters, the number of nodes per cluster and the edge probabilities within and between clusters. The number of nodes per cluster is varied for all experiments and given in the figures. The edge probabilities are set to $p_{in} = 0.7$ (within clusters) and $p_{out} = 0.08$ (between clusters), respectively. We generate cluster graphs with $R=2$ and $R=3$ clusters in our experiments. All generated instances remain fixed, and we report results as averages across $50$ independent runs of our algorithms on each instance.

Our focus is on recovering the optimal value of the modularity of eq.~\eqref{eq:modularity}, that is the best value found via brute-forcing. We report the modularity-value gap, that is the gap between the modularity value returned by any method and the one obtained by brute-forcing eq.~\eqref{eq:modularity}. The modularity-value gap is always non-negative. Since the cluster labels are interchangeable while leaving the value of the modularity invariant, without loss of generality we fix $s_1=1$ (the first vertex is allocated to the first cluster) in both the brute-force solver and Algorithm~\ref{algo:branch}.

All computations were carried out on a MacBook Air with an Apple M2 chip (8 cores with 3.49 GHz each) and 8GB of RAM.

\subsection{Benchmark algorithms}
\label{sec:benchmarks}
We compare Algorithm~\ref{algo:branch} to two benchmark algorithms. Those are the Louvain method of \cite{Blondel2008}, and the simulated annealing algorithm of \cite{Kirkpatrick1983}. The Louvain method is a widespread greedy algorithm to detect clusters or communities using a multilevel approach. It aims to find small communities with a local search heuristic that optimizes the modularity value over all nodes. Afterwards, small communities are collapsed into super-nodes before repeating the first step. The Louvain algorithm was run with the help of the \textit{community} function of the \textit{networkx.algorithms} package in Python. Since the Louvain method automatically selects the number of communities, we ran it without a constraint on the number of communities and report the number of communities it detected in the appendix.

Simulated annealing (SA) is a multi-purpose optimization algorithm which is often used for discrete optimization tasks (such as the one of optimizing eq.~\eqref{eq:modularity}). Briefly, SA starts at a randomly initialized point in the search space and draws a random walk proposal from some distribution, either independently from the current state (a so-called independent proposal), or by (slightly) modifying the current state. The function value at the proposal is evaluated and a so-called acceptance probability for the proposal is computed. The acceptance probability has the property that (a) it is non-zero even if the proposal is inferior in quality to the current state, and (b) it decreases to zero over time (with the help of a so-called temperature function), thus slowly transitioning SA from an exploratory phase to a greedy optimization phase. If the proposal is accepted, the current state is overwritten and the algorithm continues with a new proposal. The algorithm usually stops after a pre-defined number steps, although several other termination criteria are possible. Our implementation of SA is provided on GitHub \citep{github}.

Moreover, in Section~\ref{sec:efficiency}, we also compare our algorithm to a brute-force solver that enumerates all $|V|^R$ assignments of $S_{vr}$ or $s_v$ in eq.~\eqref{eq:bounds}, respectively. The brute-force solver is guaranteed to globally maximize eq.~\eqref{eq:modularity} or eq.~\eqref{eq:bounds}, respectively.

\subsection{Efficiency of the bounding step}
\label{sec:efficiency}
We are interested in evaluating how well the bounds of Section~\ref{sec:bounds} restrict the search space and allow Algorithm~\ref{algo:branch} to discard partial solutions that cannot yield the global maximum. To this end, we apply both Algorithm~\ref{algo:branch} and the brute-force solver to cluster graphs of varying size using the setting of Section~\ref{sec:setting}. We measure the number of bound evaluations within Algorithm~\ref{algo:branch}, the number of recursive calls, and the number of complete assignments (meaning $s_v \neq NA$ for all $v \in V$).

\begin{figure*}
    \centering
    \includegraphics[width=0.49\textwidth]{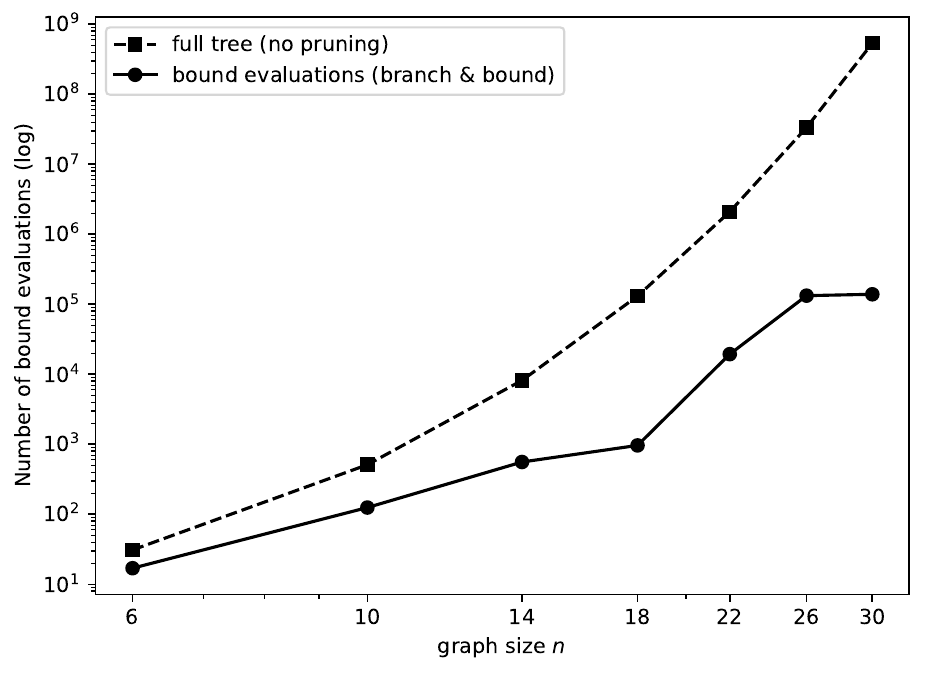}\hfill
    \includegraphics[width=0.49\textwidth]{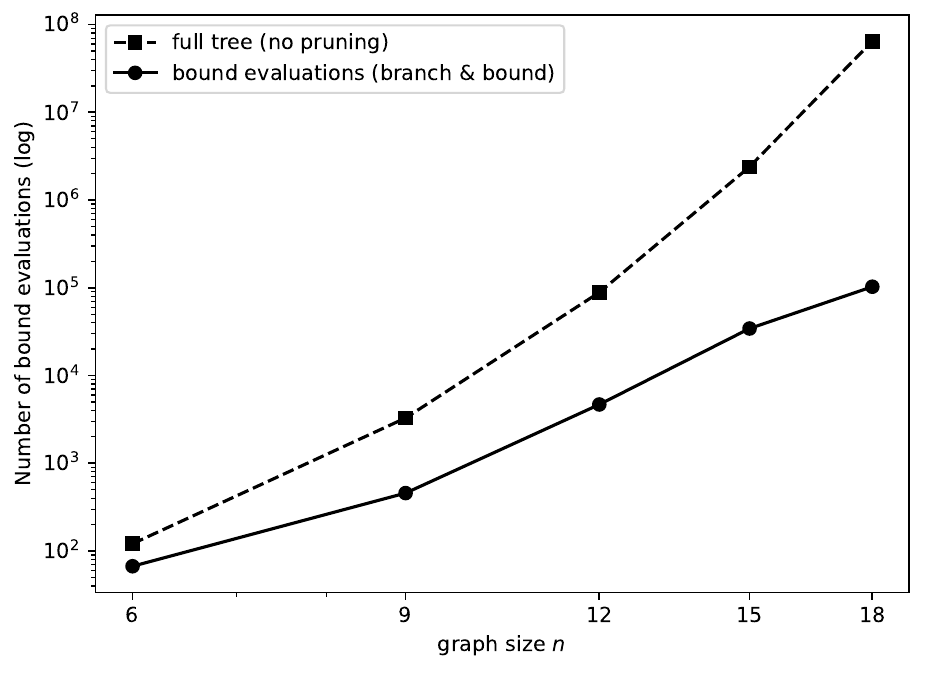}\\
    \includegraphics[width=0.49\textwidth]{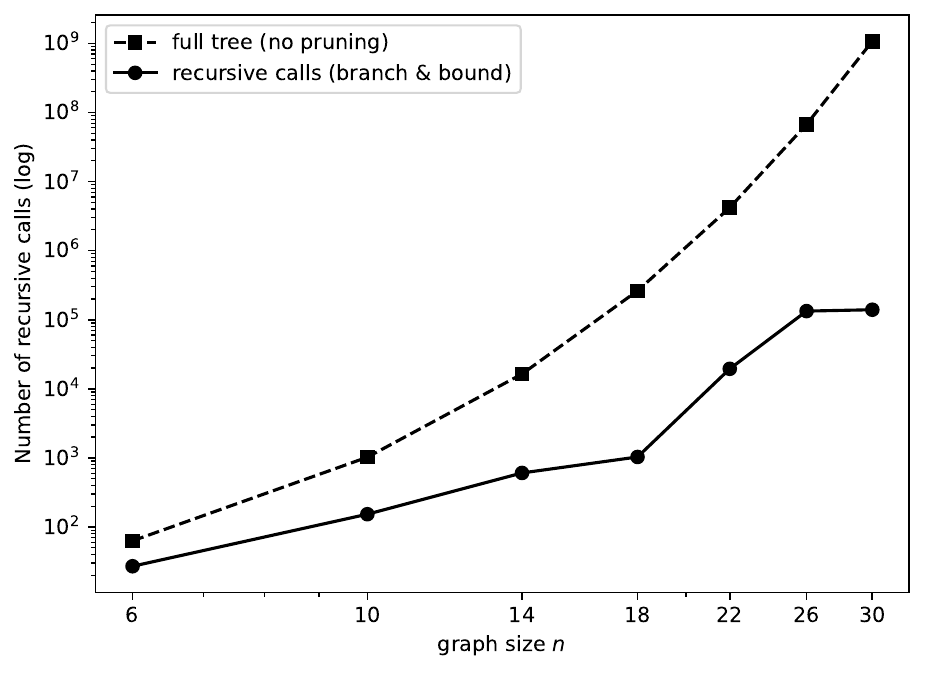}\hfill
    \includegraphics[width=0.49\textwidth]{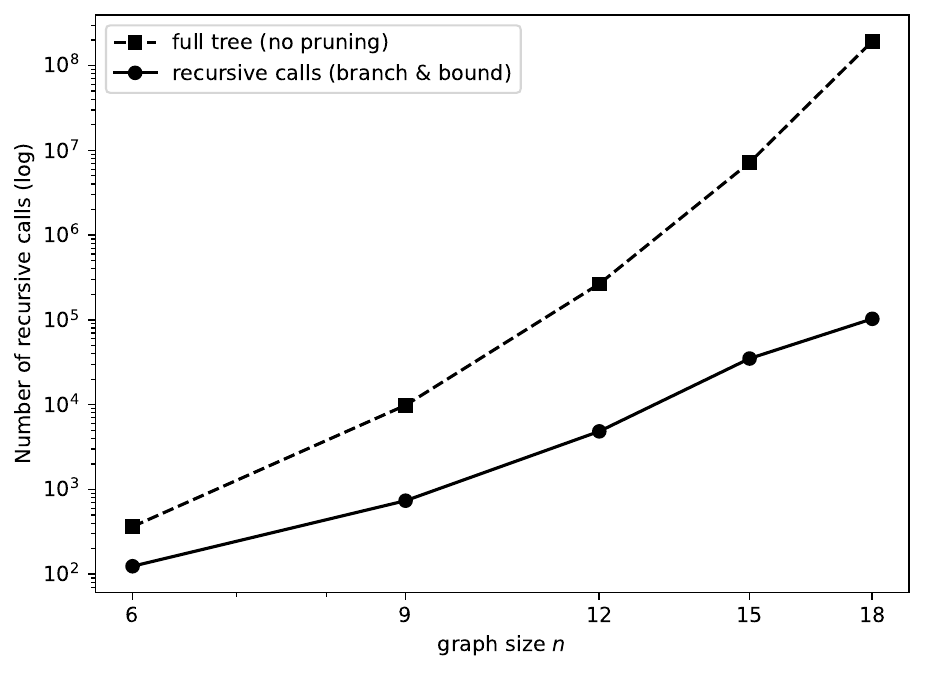}\\
    \includegraphics[width=0.49\textwidth]{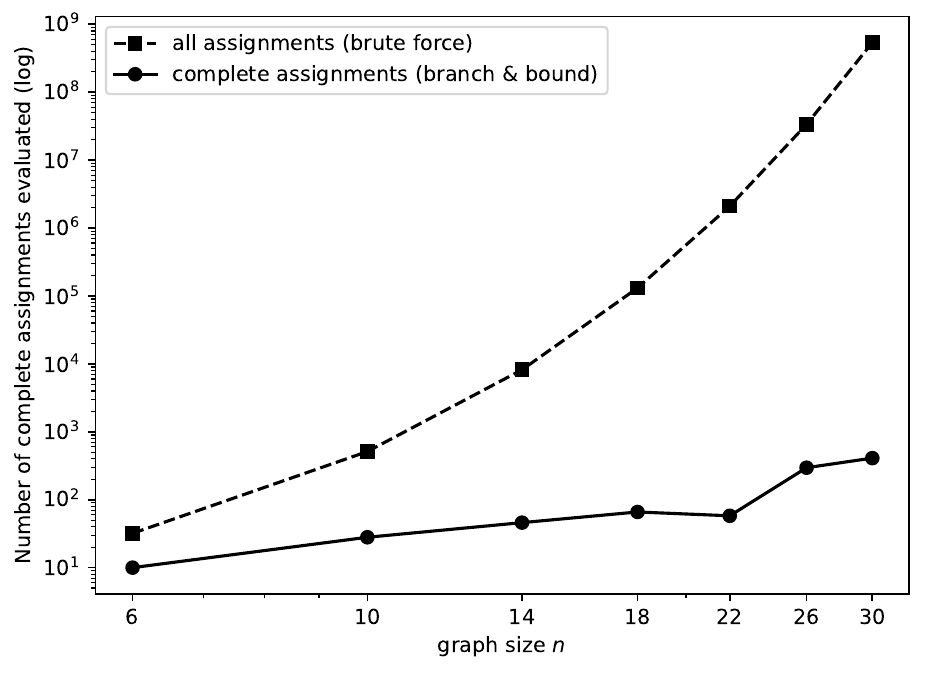}\hfill
    \includegraphics[width=0.49\textwidth]{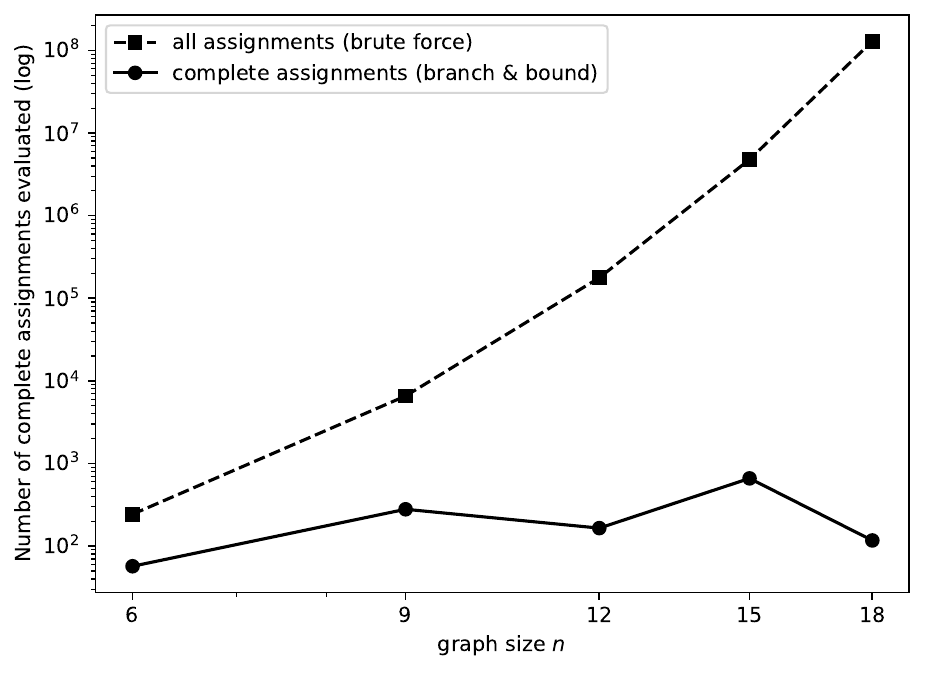}\\
    \caption{Application of Algorithm~\ref{algo:branch} and the brute-force solver to random cluster graphs of Section~\ref{sec:setting}. Number of bound evaluations (top), number of recursive calls (middle), and number of complete assignments reached (bottom) for Algorithm~\ref{algo:branch} as a function of the graph size $n$. The data points for the brute-force solver show the number of full assignments that were evaluated. Number of clusters $R=2$ (left) and $R=3$ (right). Both axes on the log scale in all plots.}
    \label{fig:enum}
\end{figure*}

Results are given in Figure~\ref{fig:enum} for $R=2$ and $R=3$ clusters. As a reference, Figure~\ref{fig:enum} also shows the number of all nodes in the full search tree visited by the brute-force solver (that is, all possible assignments of $s_v \in \{1,\ldots,R\}$ for $v \in V$). We observe that Algorithm~\ref{algo:branch} issues considerably less bound evaluations than there are nodes in the complete search tree of all possible assignments of $s_v$, $v \in V$. The number of bound evaluations is equivalent to the number of visited search-tree nodes, since the bounds are evaluated at all visited nodes. The number of recursive calls is proportional to the number of bound evaluations, as expected. Due to the pruning step, we observe that the number of complete assignments (meaning $s_v \neq NA$ for all $v \in V$) reached by Algorithm~\ref{algo:branch} is considerably smaller than the number explored by the brute-force solver.

\begin{figure*}
    \centering
    \includegraphics[width=0.49\textwidth]{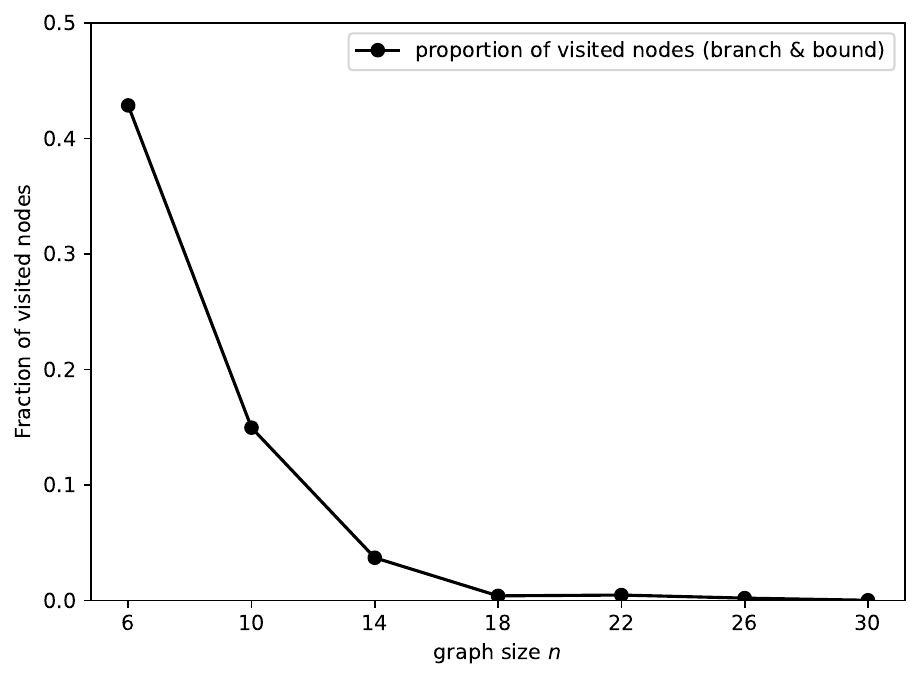}\hfill
    \includegraphics[width=0.49\textwidth]{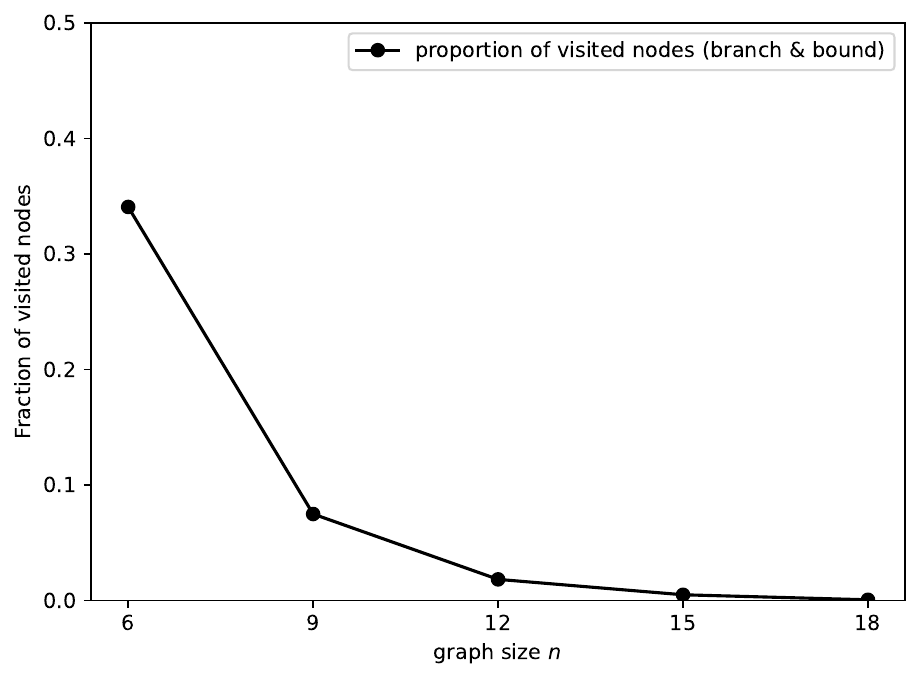}\\
    \caption{Application of Algorithm~\ref{algo:branch} to random cluster graphs of Section~\ref{sec:setting}. Proportion of the full search tree which is visited by Algorithm~\ref{algo:branch} for the results reported in Figure~\ref{fig:enum}. Number of clusters $R=2$ (left) and $R=3$ (right).}
    \label{fig:visited}
\end{figure*}

Figure~\ref{fig:visited} highlights the proportion of the full search tree which is visited by Algorithm~\ref{algo:branch}. The displayed data points correspond to the experiment reported in Figure~\ref{fig:enum}. We observe that the pruning works better the larger the graph is, showing that the proportion of the full search tree which is actually explored by Algorithm~\ref{algo:branch} decreases with the graph size $n$.

\subsection{Accuracy and runtime of the proposed algorithm}
\label{sec:accuracy_runtime}
We assess both accuracy and runtime of Algorithm~\ref{algo:branch} in comparison with the Louvain method and SA (Section~\ref{sec:benchmarks}). For this we again consider the experimental setting of Section~\ref{sec:setting}, meaning random cluster graphs with $R=2$ or $R=3$ clusters and varying size given in the figures.

We report the modularity-value gap, defined as the difference between the modularity value obtained by enumeration of all assignments in eq.~\eqref{eq:modularity} (as done by the brute-force solver) and the maximal modularity value returned by any method. If the modularity-value gap is zero for any algorithm, then the respective algorithm was able to return the optimal modularity value that can be achieved. Positive values denote a discrepancy from the best possible value.

\begin{figure*}
    \centering
    \includegraphics[width=0.49\textwidth]{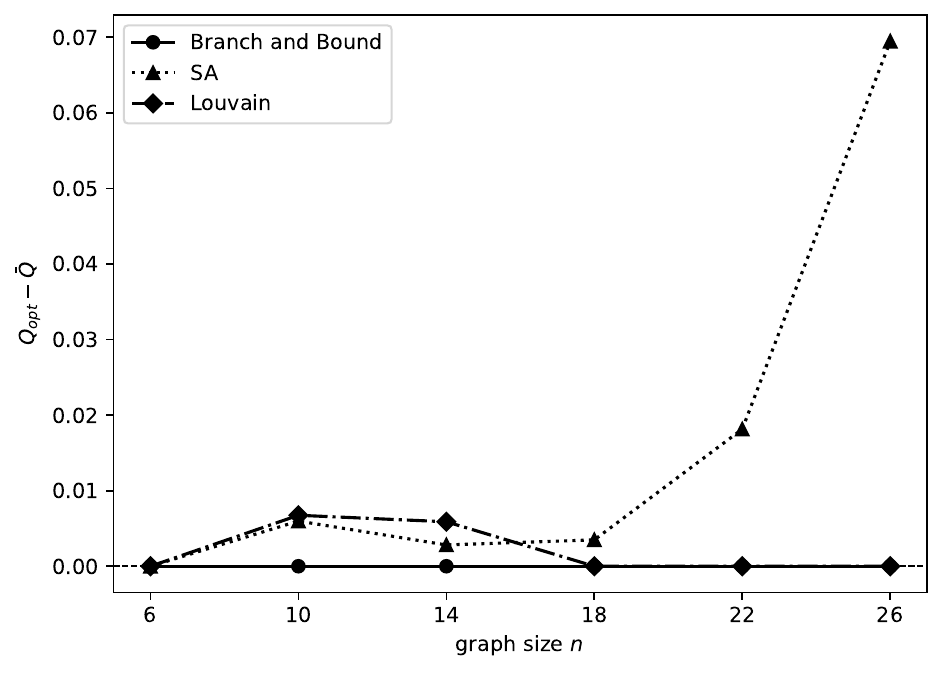}\hfill
    \includegraphics[width=0.49\textwidth]{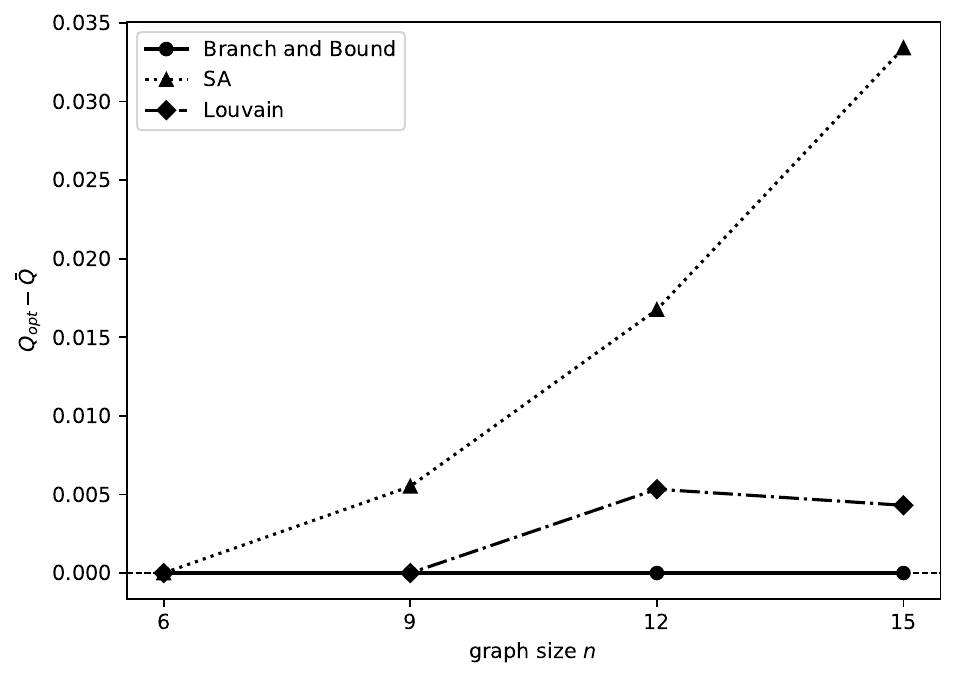}
    \caption{Application of Algorithm~\ref{algo:branch}, SA, and the Louvain method to random cluster graphs of Section~\ref{sec:setting}. Modularity-value gap (defined as the difference in modularity value between each algorithm and the value obtained with the brute-force solver) as a function of the graph size $n$. Number of clusters $R=2$ (left) and $R=3$ (right).}
    \label{fig:accuracy}
\end{figure*}

Figure~\ref{fig:accuracy} shows our results. We observe that for $R=2$ and $R=3$ clusters, our branch-and-bound approach of Algorithm~\ref{algo:branch} finds the optimal modularity value in all scenarios, meaning the best possible value that can be achieved in eq.~\eqref{eq:modularity} with brute-forcing. This is expected, see Theorem~\ref{thm:correctness}. For $R=2$, the Louvain method shows a small modularity-value gap for small $n$ but finds the optimal value for larger $n$ under consideration. However, for $R=3$ we already observe consistent discrepancies to the optimal modularity value for the Louvain method as $n$ increases. SA seems to be less suited for the task of modularity maximization and shows considerable discrepancies for large $n$. The number of communities detected by the Louvain method is given in Figure~\ref{fig:louvain} in Appendix~\ref{sec:appendix}. Figure~\ref{fig:louvain} shows that number of communities detected by the Louvain method mostly equals the one we preset when running both Algorithm~\ref{algo:branch} and SA, apart from the case of very small graph sizes.

When looking at the runtime, it is worth noting that both accuracy and runtime should not be considered separately. Indeed, fast runtimes are easy to achieve for the task of maximizing eq.~\eqref{eq:modularity} at the expense of a deterioration in quality. For this reason, we report both the wall-clock runtime in seconds and the Time-To-Solution (TTS) metric \citep{maxclique_qtop}. The TTS metric incorporates both the runtime and a measure of the solution quality, precisely the probability of finding the optimal (brute-forced) modularity value of eq.~\eqref{eq:modularity}. The TTS metric is defined as
\begin{align}
    \text{TTS} = T \cdot \frac{\log(1-0.99)}{\log(1-p)},
    \label{eq:tts}
\end{align}
where $T$ is the mean runtime among a fixed number of $N \in \N$ replicates (in our case, $N=50$, see Section~\ref{sec:setting}), and $p$ is the empirical probability of finding the optimal (brute-forced) modularity value among those $N$ runs. The TTS metric reflects the time it takes to reach an optimum solution with a $99$ percent confidence. If $p=1$ in eq.~\eqref{eq:tts}, the optimal solution was always found and we report the average (deterministic) runtime instead of the TTS. If $p=0$, the optimal solution was never found, and we define TTS as infinity. This case did not occur in our experiments.

\begin{figure*}
    \centering
    \includegraphics[width=0.49\textwidth]{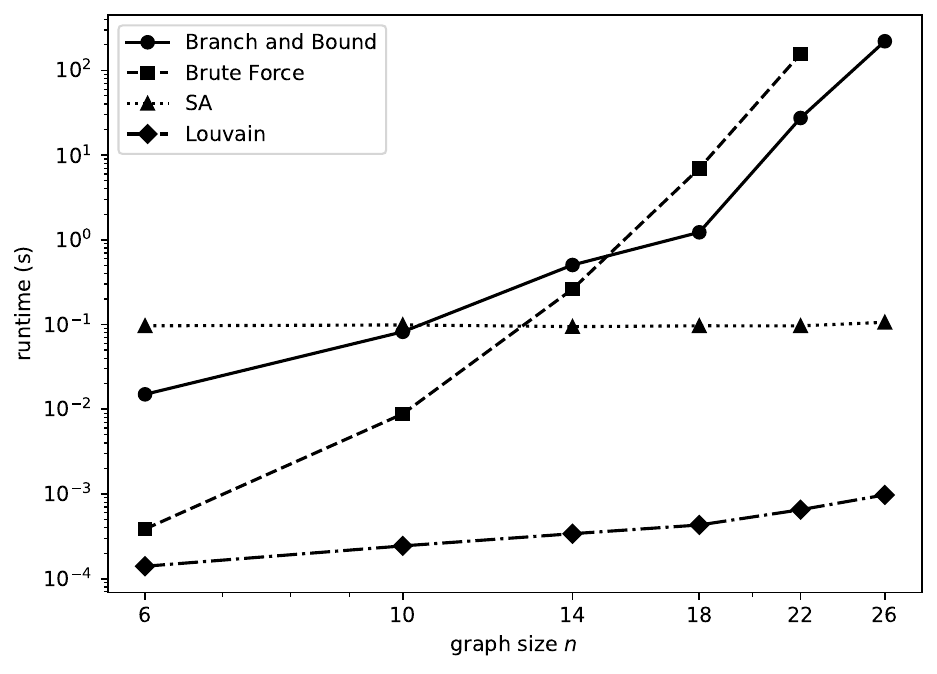}\hfill
    \includegraphics[width=0.49\textwidth]{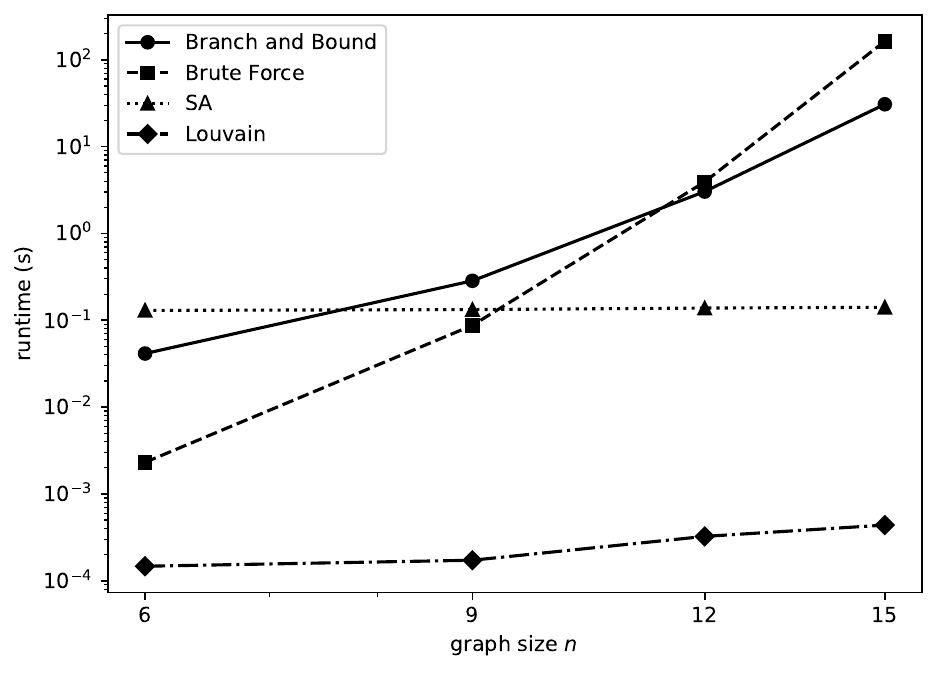}\\
    \includegraphics[width=0.49\textwidth]{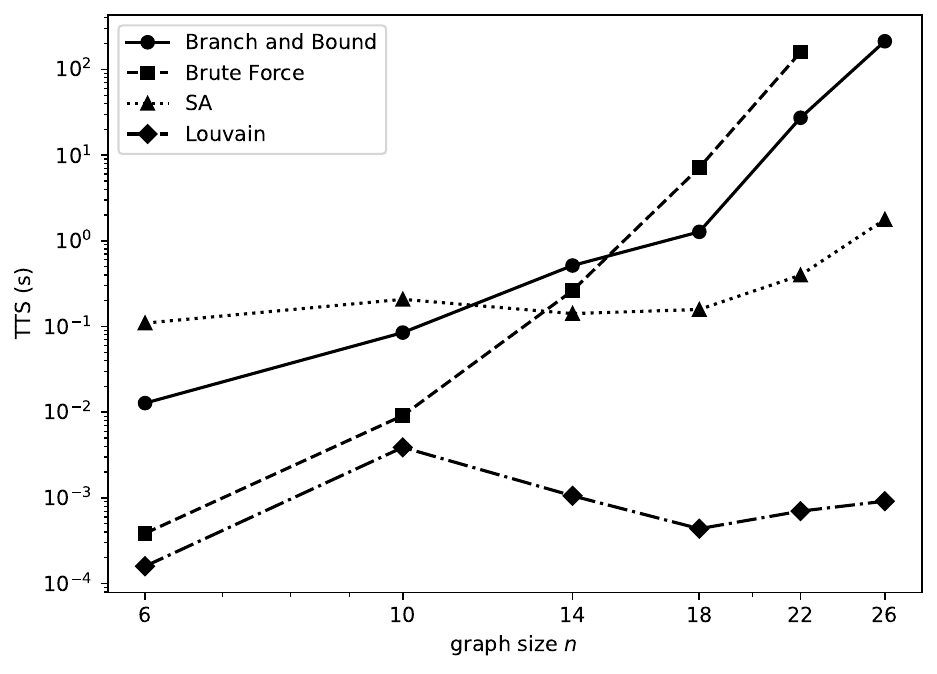}\hfill
    \includegraphics[width=0.49\textwidth]{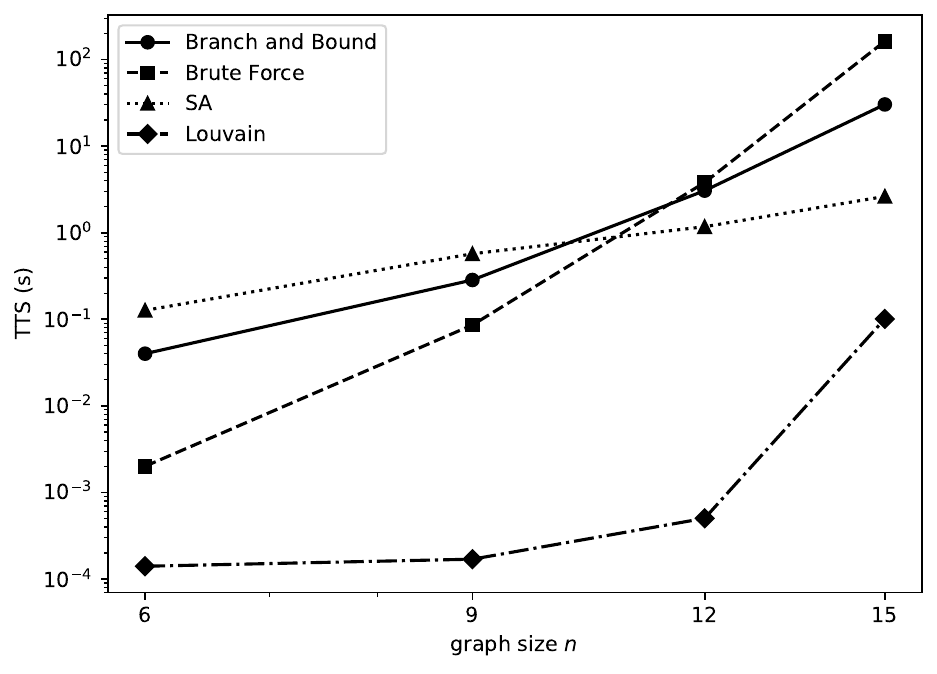}
    \caption{Application of Algorithm~\ref{algo:branch}, SA, and the Louvain method to random cluster graphs of Section~\ref{sec:setting}. Wall-clock runtime (top) and TTS metric (bottom) in seconds for the results reported in Figure~\ref{fig:accuracy}. Number of clusters $R=2$ (left) and $R=3$ (right). Both axes on the log scale in all plots.}
    \label{fig:runtime}
\end{figure*}

Figure~\ref{fig:runtime} shows the wall-clock runtimes and TTS times for Algorithm~\ref{algo:branch}, the brute-force solver, the Louvain method, and SA. Looking at the wall-clock runtimes for both $R=2$ and $R=3$, we observe that the Louvain method achieves the fastest execution times. We attribute this to the optimized implementation in the \textit{networkx} package in Python. The runtimes of SA are constant, which is due to the fact that we run SA with a cutoff on the number of iterations as one of the termination critertia. This is standard practice for the SA algorithm. Importantly, we observe that Algorithm~\ref{algo:branch} is slower than the brute-force solver for small graph sizes, which is expected due to the overhead stemming from the bound computations with Algorithm~\ref{algo:bounds}. For larger $n$, Algorithm~\ref{algo:branch} achieves considerably lower runtimes than the brute-force approach.

This picture is confirmed when looking at the TTS metric that takes into account the accuracy of all methods. Here, the Louvain method is the fastest method for $R=2$ and $R=3$, although in the case of $R=3$ we observe a steep increase in TTS as the graph size increases. The TTS metric for SA is not constant anymore, demonstrating the fact that the quality of the SA solutions deteriorates with increasing graph size. As observed for wall-clock runtimes, Algorithm~\ref{algo:branch} is slower than the brute-force solver for small graph sizes and achieves a more favorable runtime scaling than the brute-force solver as the graph size increases.

\section{Discussion}
\label{sec:discussion}
In this article, we propose a branch-and-bound algorithm that maximizes the network modularity, a widespread metric that measures the division of a network into modules or clusters. Our algorithm is proven to globally maximize the modularity value, however, it achieves this without a full exploration of the search space of all $R^{|V|}$ possible assignments, where $V$ is the set of vertices and $R$ is the number of clusters. This is possible with the help of guaranteed lower and upper bounds on the network modularity for incomplete assignments of vertices to clusters.

In a simulation study, we demonstrate that the bounds allow for considerable pruning of the search space in comparison to a brute-force approach that enumerates all possible solutions. This allows us to find optimal solutions with our branch-and-bound algorithm while reducing runtimes in practice.

The worst-case runtime of our branch-and-bound algorithm is $O(|V|^2 R^{|V|})$. This is higher than the complexity of $O(R^{|V|})$ for the brute-force approach, due to a polynomial overhead for the computation of the bounds. This seemingly contradicts the experimental results presented in this article. However, despite having a higher worst-case complexity, we demonstrate that the computation of the bounds yields a considerable reduction in the number of explored search-tree nodes in practice.

Our algorithm also has several limitations. First, its worst-case runtime complexity is still exponential. Nevertheless, this is to be expected for an exact algorithm, since the task of globally maximizing the network modularity is NP-hard. Second, our algorithm assumes that the (maximal) number of clusters $R$ is given, while other algorithms, such as the Louvain algorithm, detect it at runtime. This can be an advantage or disadvantage depending on the application. Third, our algorithm does not impose that all clusters be used, meaning it is not guaranteed that there is at least one node per cluster. Indeed, if using fewer than $R$ clusters leads to a larger value of the network modularity given in eq.~\eqref{eq:modularity}, the algorithm might use fewer clusters. Fourth, due to our focus on the exact optimization of the modularity value, our simulations are restricted to small graph sizes. This issue is tied to a potential looseness of the bound we derive on the modularity value. We hope to address this issue in future work by concentrating on the development of sharper bounds on the modularity value. Another research direction could aim to tighten the bounds by making them probabilistic confidence bands. This would likely result in stronger pruning, however, the usage of probabilistic confidence bands comes at the expense of losing the optimality property established for our branch-and-bound algorithm. Fifth, our algorithm does not consider label symmetry or order dependence apart from the fact that we fix $s_1=1$. Since in our current implementation, all permutations of the node labels are explored independently, our algorithm has some degree of avoidable redundancy. More sophisticated symmetry-breaking techniques would help to narrow down the search space and improve the runtime further. Sixth, our experiments are based on simulated data and do not consider real-world networks. However, due to the limitation to small graph sizes, we leave the validation on real-world networks for future work, ideally when more sophisticated bounds enabling stronger pruning are available.

\appendix
\section{Additional simulations}
\label{sec:appendix}
The appendix provides two more panels on the Louvain method, see Figure~\ref{fig:louvain}. Since the implementation of the Louvain method available in \textit{networkx} detects the number of communities $R$ by itself, we provide the number of communities found by the Louvain method for the results reported in Figure~\ref{fig:accuracy}. Each data point in Figure~\ref{fig:louvain} shows averages across the $50$ runs with error bars of one standard deviation. The figure demonstrates that the number of communities detected corresponds to the target number we use when calling Algorithm~\ref{algo:branch} for large graph sizes $n$. For small $n$, especially in the case of $R=2$, the Louvain method might detect more communities than anticipated. We attribute this to the small graph size $n$, which causes greater volatility in the detection of the community number.

\begin{figure*}
    \centering
    \includegraphics[width=0.49\textwidth]{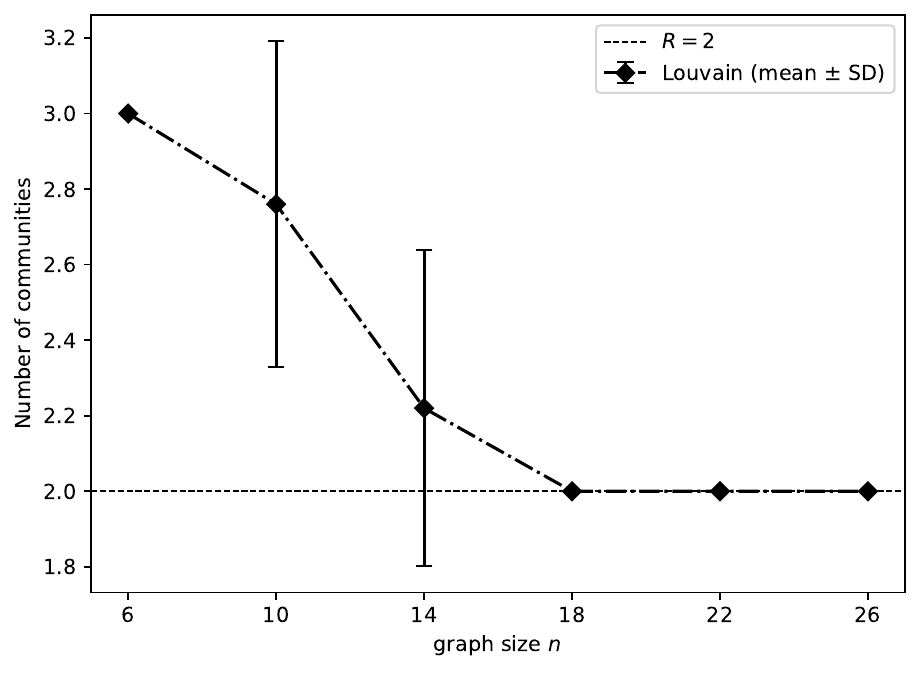}\hfill
    \includegraphics[width=0.49\textwidth]{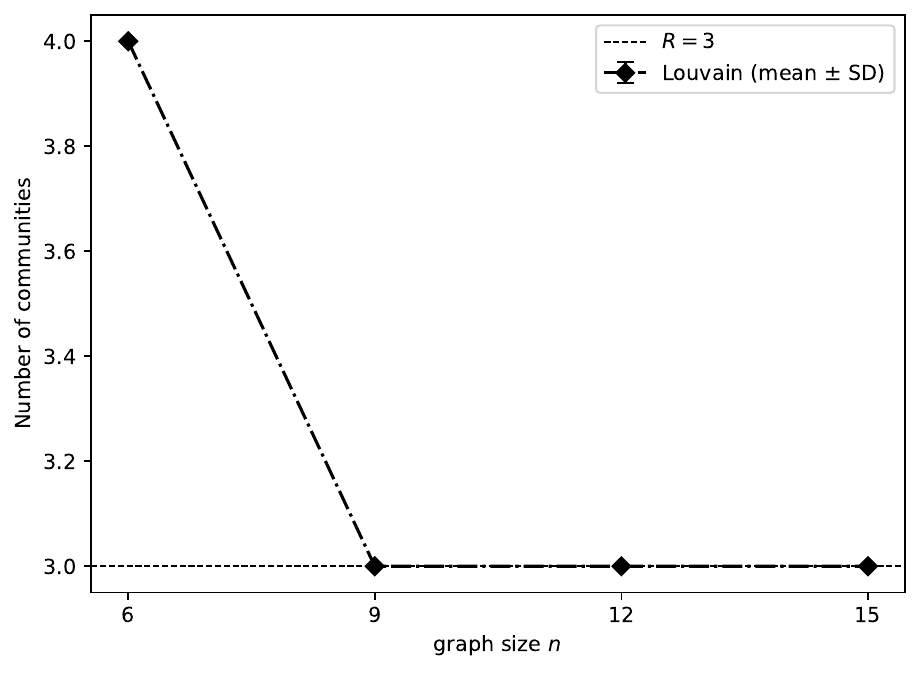}
    \caption{Number of communities found by the Louvain method for the results reported in Figure~\ref{fig:accuracy}. Data points show averages across the $50$ runs with error bars of one standard deviation. Target number of clusters $R=2$ (left) and $R=3$ (right) indicated with horizontal lines.}
    \label{fig:louvain}
\end{figure*}

\bibliographystyle{apalike}
\bibliography{main}
\end{document}